\documentclass[11pt]{article}
\parindent 0cm
\parskip 0.2cm
\hoffset -0.6in
\voffset -0.5in
\textwidth 6in
\textheight 9in

\usepackage{amsmath, amsfonts, url}
\usepackage[T1]{fontenc}
\usepackage[latin1]{inputenc}
\usepackage{amssymb}

\begin{document}

\makeatletter

\newenvironment{algorithm}{\begin{algorithm1}\ \\
    \vspace{-0.2cm}}{\end{algorithm1}}

\newenvironment{proofsk}{\begin{proof}[Proof Sketch:]}
{\end{proof}}

\newenvironment{smallproof}{\nopagebreak \begin{quote} %
\begin{small} \noindent{\bf Proof:}}{ \qed \par %
\end{small} \end{quote} \medskip}

\newenvironment{note}{\nopagebreak \begin{quote} %
\noindent{\bf Note:}}{%
\end{quote} \medskip}

\newenvironment{notes}{\nopagebreak \begin{quote} %
\noindent{\bf Notes:} \par%
\begin{itemize}}{%
\end{itemize}\end{quote} \medskip}

\newenvironment{summary}{\begin{quote} {\bf Summary:}}{\end{quote}}

%%%%%%%%%%%%%%%%%%%%%%%%%%%%%%%%%
% General Macros

\newcommand{\eqdef}{\stackrel{def}{=}}
\newcommand{\N}{\mathbb{N}}
\newcommand{\R}{\mathbb{R}}
\newcommand{\C}{\mathbb{C}}
\newcommand{\Z}{\mathbb{Z}}
\newcommand{\F}{\mathbb{F}}
\newcommand{\Zn}{{\Z}_n}
\newcommand{\bits}{\{0,1\}}
\newcommand{\inr}{\in_{\mbox{\tiny R}}}
\newcommand{\getsr}{\gets_{\mbox{\tiny R}}}
\newcommand{\st}{\mbox{ s.t. }}
\newcommand{\etal}{{\it et al }}
\newcommand{\into}{\rightarrow}

\newcommand{\Ex}{\mathbb{E}}
\newcommand{\To}{\rightarrow}
\newcommand{\e}{\epsilon}
\newcommand{\ee}{\varepsilon}
\newcommand{\ceil}[1]{{\lceil{#1}\rceil}}
\newcommand{\floor}[1]{{\lfloor{#1}\rfloor}}
\newcommand{\angles}[1]{\langle #1 \rangle}
\newcommand{\var}{\mbox{var}}
\newcommand{\trace}{\mbox{trace}}
\newcommand{\ignore}[1]{}
\newcommand{\Alg}{\mathrm{Alg}}
\newcommand{\fro}[1]{\|#1\|_F}
\newcommand{\trn}[1]{\|#1\|_{tr}}
\newcommand{\norm}[1]{\|#1\|}

% complexity classes
\newcommand{\NP}{\mathbf{NP}}
\renewcommand{\P}{\mathbf{P}}
\newcommand{\PCP}{\mathbf{PCP}}
\newcommand{\RP}{\mathbf{RP}}
\newcommand{\BPP}{\mathbf{BPP}}

\newcommand{\Tr}{\mathrm{Tr}}

\newcommand{\lang}{\mathcal{L}}

\newcommand{\Poincare}{Poincar�}

%%%%%%%%%%%%%%%%%%%%%%%%%%%%%%%%%%%%%%%%%%%%%%%%%%%%%%%%%%%%%%%%%%%%%%%%%%%%%%%%%%%%%%%%%%%%%%%%%%%%%%%%%%%%%
%% Environment definitions
%%%%%%%%%%%%%%%%%%%%%%%%%%%%%%%%%%%%%%%%%%%%%%%%%%%%%%%%%%%%%%%%%%%%%%%%%%%%%%%%%%%%%%%%%%%%%%%%%%%%%%%%%%%%%

\newtheorem{theorem}{Theorem}
\newtheorem{lemma}[theorem]{Lemma}
\newtheorem{fact}[theorem]{Fact}
\newtheorem{claim}[theorem]{Claim}
\newtheorem{corollary}[theorem]{Corollary}
\newtheorem{conjecture}[theorem]{Conjecture}
\newtheorem{question}[theorem]{Question}
\newtheorem{proposition}[theorem]{Proposition}
\newtheorem{axiom}[theorem]{Axiom}
\newtheorem{remark}[theorem]{Remark}
\newtheorem{example}[theorem]{Example}
\newtheorem{exercise}[theorem]{Exercise}
\newtheorem{definition}[theorem]{Definition}
\newtheorem{observation}[theorem]{Observation}

\def\pproof{\par\penalty-1000\vskip .5 pt\noindent{\bf Proof\/ }}
\newcommand{\QED}{\hfill$\;\;\;\rule[0.1mm]{2mm}{2mm}$}

\newenvironment{proof}{\begin{pproof}}{\QED\end{pproof}~\\}

%%%%%%%%%%%%%%%%%%%%%%%%%%%%%%%%%%%%%%%%%%%%%%%%%%%%%%%%%%%%%%%%%%%%%%%%%%%%%%%%%%%%%%%%%%%%%%%%%%%%%%%%%%%%%
%% Macro definitions
%%%%%%%%%%%%%%%%%%%%%%%%%%%%%%%%%%%%%%%%%%%%%%%%%%%%%%%%%%%%%%%%%%%%%%%%%%%%%%%%%%%%%%%%%%%%%%%%%%%%%%%%%%%%%

\newcommand{\NChooseM}[2]{\ensuremath{\lp{(}\begin{array}{cc}#1\\#2\end{array}\rp{)}}}
\newcommand{\mc}[1]{{\cal{#1}}}
\newcommand{\lp}[1]{\left #1}
\newcommand{\rp}[1]{\right #1}
\newcommand{\vect}[1]{\ensuremath{{\mathbf #1}}}
\newcommand{\IP}[2]{\ensuremath{\lp{<}#1,#2\rp{>}}}
% sign vector
\newcommand{\sv}{\ensuremath{\vect{v}}}
% sign vector entry
\newcommand{\sve}{\ensuremath{v}}

\newcommand{\op}{W}
\newcommand{\opr}{\vect{w}}
\newcommand{\allvects}[1]{L_{#1}}
\newcommand{\avn}[1]{L_{#1}}
\newcommand{\av}{L_n}
\newcommand{\ff}{\lfloor f \rfloor}

\newcommand{\sizeo}[1][\e]{\mathrm{size}^{(v)}_{#1}}
\newcommand{\corro}[2][\e]{\mathrm{ubc}^{(#2)}_{#1}}
\newcommand{\corr}[1][\e]{\mathrm{ubc}_{#1}}
\newcommand{\rcorro}[2][\e]{\mathrm{corr}^{(#2)}_{#1}}
\newcommand{\rcorr}[1][\e]{\mathrm{corr}_{#1}}
\newcommand{\size}[1][\e]{\mathrm{size}_{#1}}
\newcommand{\scorr}[1][\e]{\mathrm{sc}_{#1}}
\newcommand{\monoo}[2][\rho]{\mathrm{mono}^{(#2)}_{#1}}
\newcommand{\mono}[1][\rho]{\mathrm{mono}_{#1}}
\newcommand{\hmonoo}[2][\rho]{\mathrm{hmono}^{(#2)}_{#1}}
\newcommand{\hmono}[1][\rho]{\mathrm{hmono}_{#1}}

\newcommand{\rk}{\mathrm{rank}}

\newcommand{\ind}{\ensuremath{\alpha}}
\newcommand{\chr}{\ensuremath{\chi}}

\newcommand{\rs}{Ruzsa-Szemer\'{e}di}

\title{A Note on Multiparty Communication Complexity and the Hales-Jewett Theorem}

\author{Adi Shraibman \\ The School of Computer Science \\  
The Academic College of Tel Aviv-Yaffo\\{\tt adish@mta.ac.il}}

\date{}

\maketitle

\abstract{
For integers $n$ and $k$, the {\em density Hales-Jewett number} $c_{n,k}$
is defined as the maximal size of a subset of $[k]^n$ that contains no combinatorial
line. We show that for $k \ge 3$ the density Hales-Jewett number $c_{n,k}$ is equal to the 
maximal size of a cylinder intersection in the problem $Part_{n,k}$ of testing whether $k$ subsets of $[n]$ form a partition.
It follows that the communication complexity, in the 
Number On the Forehead (NOF) model, of $Part_{n,k}$, is equal to the minimal 
size of a partition of $[k]^n$ into subsets that do not contain a combinatorial line.  
Thus, the bound in \cite{chattopadhyay2007languages} on $Part_{n,k}$ using
the Hales-Jewett theorem is in fact tight, and the density Hales-Jewett number
can be thought of as a quantity in communication complexity. This gives a new angle to this well studied
quantity.

As a simple application we prove a lower bound on $c_{n,k}$, similar to the lower bound in \cite{polymath2010moser}
which is roughly $c_{n,k}/k^n \ge \exp(-O(\log n)^{1/\lceil \log_2 k\rceil})$. This lower bound follows from a protocol
for $Part_{n,k}$. It is interesting to better understand the communication complexity of $Part_{n,k}$ as this
will also lead to the better understanding of the Hales-Jewett number.  
The main purpose of this note is to motivate this study.
}

\section{Introduction}

For any integers $n \ge 1$ and $k \ge 1$, consider the set $[k]^n$, of {\em words} 
of length $n$ over the alphabet $[k]$.
Define a {\em combinatorial line} in $[k]^n$ as a subset of $k$ distinct words such that we can place
these words in a $k \times n$ table so that all columns in this table belong to the set
$\{ (x,x,\ldots,x) : x\in [k] \} \cup \{ (1,2,\ldots,k) \}$. The   
{\em density Hales-Jewett number} $c_{n,k}$ is defined to be the maximal cardinality
of a subset of $[k]^n$ which does not contain a combinatorial line.

Clearly, $c_{n,k} \le k^n$, and a deep theorem of Furstenberg and Katznelson 
\cite{furstenberg1989density, furstenberg1991density} says that $c_{n,k}$ 
is asymptotically smaller than $k^n$:
\begin{theorem}[Density Hales-Jewett theorem]
For every positive integer $k$ and every real number $\delta > 0$ there exists
a positive integer $DHJ(k, \delta)$ such that if $n \ge DHJ(k,\delta)$ then any subset
of $[k]^n$ of cardinality at least $\delta k^n$ contains a combinatorial line.
\end{theorem}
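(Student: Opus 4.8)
The plan is to prove the theorem by induction on the alphabet size $k$, driven by a density-increment argument. The case $k=1$ is trivial, and the case $k=2$ follows from Sperner's theorem: identifying a word in $[2]^n$ with the set of positions carrying the symbol $2$, a combinatorial line is exactly a pair $U \subsetneq V$ of comparable sets, and since the largest antichain in the Boolean lattice has density $O(1/\sqrt{n}) \to 0$, any family of positive constant density must contain a comparable pair once $n$ is large. The real work is the inductive step: assuming that $DHJ(j,\cdot)$ exists for all $j < k$, produce $DHJ(k,\delta)$. Starting from $A \subseteq [k]^n$ of density at least $\delta$ containing no combinatorial line, I would show that one can always pass to a combinatorial subspace of large dimension $n'$ on which the relative density of $A$ exceeds $\delta + c(\delta)$ for a fixed increment $c(\delta) > 0$. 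Since density never exceeds $1$, this step applies at most $O(1/c(\delta))$ times; each application consumes only a bounded amount of dimension, so for $n$ large enough the iteration cannot run to completion without $A$ being forced to contain a line. Tracking these costs yields the bound $DHJ(k,\delta)$.

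The heart of the argument is the structural dichotomy that produces the increment. Here I would introduce $ij$-insensitive sets: $B \subseteq [k]^n$ is $ij$-insensitive if membership in $B$ is unchanged when coordinates are switched between the values $i$ and $j$. The key correlation lemma asserts that a dense line-free $A$ must correlate with an intersection of such insensitive sets, the intuition being that line-freeness is itself a strong constraint, since a line is precisely a simultaneous raising of the moving coordinates through the symbols $1, \ldots, k$. This is where the induction hypothesis enters: an $ij$-insensitive set is essentially pulled back from $[k-1]^n$ under the map identifying the symbols $i$ and $j$, so $DHJ(k-1)$ applies to its structure and locates a combinatorial subspace on which $A$ is denser than average.

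The main obstacle, and the step I expect to be by far the hardest, is establishing this correlation and turning it into a quantitative increment with every parameter under control: the increment $c(\delta)$ must be bounded below independently of $n$, and the dimension lost at each stage must be genuinely bounded. A delicate point is the interaction between the uniform distribution on $[k]^n$ and the non-product distributions that arise when one restricts to subspaces or conditions on insensitive sets; passing to an equal-slices type measure and transferring back is one device for keeping the combinatorics clean. Getting these estimates to close the density-increment loop is the entire difficulty of the theorem.

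Alternatively, and this is the route taken originally by Furstenberg and Katznelson, one can bypass the finitary increment entirely: the Furstenberg correspondence principle translates the statement into a multiple-recurrence assertion for a measure-preserving system, which is then established by ergodic structure theory, decomposing the system into compact and weakly mixing parts relative to a tower of factors. That approach gives the qualitative existence of $DHJ(k,\delta)$ with no explicit bound, and its hard part is instead the infinitary structure theorem that drives the recurrence.
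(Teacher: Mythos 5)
The paper does not prove this theorem at all: it is quoted as a known deep result of Furstenberg and Katznelson, with the combinatorial proof and quantitative bounds attributed to the Polymath project. So there is no in-paper argument to compare against; what matters is whether your sketch stands on its own, and it does not. You have written an accurate roadmap of the Polymath density-increment proof (and, in the last paragraph, of the ergodic route), but the entire mathematical content of the theorem sits inside the two steps you explicitly defer: the correlation lemma asserting that a dense line-free set correlates with an intersection of $ij$-insensitive sets, and the conversion of that correlation into a density increment on a combinatorial subspace of dimension tending to infinity, with the increment $c(\delta)$ uniform in $n$. Naming these steps and saying they are ``the entire difficulty'' is a correct self-assessment, but it means the proposal is a plan, not a proof. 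One further structural point you elide: the induction as you state it does not close on $DHJ(k-1,\cdot)$ alone. To locate a large-dimensional subspace inside a dense $ij$-insensitive set (and to partition such sets into subspaces for the increment step) you need the \emph{multidimensional} density Hales--Jewett theorem for alphabet $k-1$, which must itself be derived from $DHJ(k-1,\cdot)$ by a separate product argument and carried along as part of the induction hypothesis. Without that, the inductive step as described cannot be executed.

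Your $k=1$ and $k=2$ base cases are fine (the Sperner reduction is exactly right), and the remark about equal-slices measure correctly identifies the device the Polymath proof uses to handle the distributional mismatch under restriction. But as submitted, the argument has a gap coextensive with the theorem itself.
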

The above theorem is a density version of the Hales-Jewett theorem:
\begin{theorem}[Hales-Jewett theorem]
For every pair of positive integers $k$ and $r$ there exists a positive number $HJ(k,r)$
such that for every $n \ge HJ(k,r)$ and every $r$-coloring of the set $[k]^n$ there is a
monochromatic combinatorial line.
\end{theorem}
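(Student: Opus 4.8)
The plan is to derive the Hales-Jewett theorem directly from the Density Hales-Jewett theorem stated just above, treating the latter as a black box. All the combinatorial depth resides in the density statement, and the passage to the coloring (Ramsey) version is a one-line pigeonhole argument. So I would not attempt a self-contained proof of Hales-Jewett from scratch (via Shelah's argument or the original inductive proof), but instead exhibit the standard implication ``density version $\Rightarrow$ coloring version.''

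First I would fix positive integers $k$ and $r$ and set $HJ(k,r) := DHJ(k, 1/r)$, where $DHJ$ is the function furnished by the Density Hales-Jewett theorem, applied with the fixed real number $\delta = 1/r > 0$. Now take any $n \ge HJ(k,r)$ and any $r$-coloring $\chi \colon [k]^n \to [r]$. The color classes $\chi^{-1}(1), \dots, \chi^{-1}(r)$ partition $[k]^n$ into $r$ sets whose cardinalities sum to $k^n$. By the pigeonhole principle, at least one color class, say $\chi^{-1}(j)$, has cardinality at least $k^n / r = \delta k^n$. Since $n \ge HJ(k,r) = DHJ(k, 1/r)$, the Density Hales-Jewett theorem guarantees that the set $\chi^{-1}(j)$, being of density at least $\delta$, contains a combinatorial line. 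Every word on this line is colored $j$ by $\chi$, so the line is monochromatic, which is exactly the conclusion asserted by the Hales-Jewett theorem.

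There is genuinely no obstacle in this direction of the implication: the entire content has been front-loaded into the density statement, and the deduction is purely a counting step. The only point worth flagging is that once $r$ is fixed the value $\delta = 1/r$ is a fixed positive real, so $DHJ(k, 1/r)$ is well defined and finite, making $HJ(k,r)$ a bona fide positive integer. The truly hard theorem is the density version itself, whose proof---whether through the ergodic-theoretic method of Furstenberg and Katznelson \cite{furstenberg1989density, furstenberg1991density} or the later combinatorial argument---lies far beyond such a short deduction and is taken here as a given black box.
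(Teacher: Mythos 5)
Your deduction is correct, and it matches the paper's own treatment: the paper states the Hales--Jewett theorem as a classical background result and explicitly remarks that ``the density Hales--Jewett theorem implies the Hales--Jewett theorem,'' without writing out the implication. Your pigeonhole argument (take $\delta = 1/r$, set $HJ(k,r) = DHJ(k,1/r)$, and note some color class has density at least $1/r$) is exactly the standard way to make that remark precise, so there is nothing to object to.
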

 
Note that the density Hales-Jewett theorem implies the Hales-Jewett theorem
but not the other way around.  
The density Hales-Jewett theorem is a fundamental result of Ramsey theory.
It implies several well known results, such as van der Waerden's theorem \cite{van1927beweis}, Szemer\'edi's
theorem on arithmetic progressions of arbitrary length \cite{szemeredi1975sets} and its multidimensional
version \cite{furstenberg1978ergodic}.

The proof of Furstenberg and Katznelson used ergodic-theory and gave no explicit bound
on $c_{n,k}$. Recently, additional proofs of this theorem were found \cite{polymath2009new, austin2011deducing, dodos2013simple}. The proof of
\cite{polymath2009new} is the first combinatorial proof of the density Hales-Jewett theorem, 
and also provides effective bounds for $c_{n,k}$. In a second paper
\cite{polymath2010moser} in this project, several values of $c_{n,3}$ are 
computed for small values of $n$.
Using ideas from recent work \cite{elkin2010improved, green2010note, o2011sets} on the
construction of Behrend \cite{behrend1946sets} and Rankin \cite{rankin1965sets},
they also prove the following asymptotic bound on $c_{n,k}$.
Let $r_k(n)$ be the maximal size of a subset of $[n]$ without an arithmetic progression of length $k$,
then: 
\begin{theorem}[\cite{polymath2010moser}]
For each $k \ge 3$, there is an absolute constant $C > 0$ such that
\[
c_{n,k} \ge Ck^n \left( \frac{r_k(\sqrt{n})}{\sqrt{n}}\right)^{k-1} = k^n \exp\left(-O(\log n)^{1/\lceil \log_2 k \rceil}\right).
\]
\end{theorem}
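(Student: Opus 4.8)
The plan is to prove the bound in the equivalent form on the right, namely $c_{n,k}\ge k^n\exp(-O((\log n)^{1/\lceil \log_2 k\rceil}))$, by exhibiting a single large line-free set. I would obtain it as the largest color class of an explicit coloring of $[k]^n$ by $R=\exp(O((\log n)^{1/\lceil \log_2 k\rceil}))$ colors in which no combinatorial line is monochromatic. Such a coloring is exactly a partition of $[k]^n$ into $R$ line-free parts, so — in the language this note advocates — it is the object dual to an NOF protocol for $Part_{n,k}$ of cost $\log_2 R$. For the lower bound itself one only needs that, once no class contains a whole line, the largest of the $R$ classes is line-free and has at least $k^n/R$ words, whence $c_{n,k}\ge k^n/R$.

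To build the coloring I would route every word through a one-dimensional arithmetic-progression-free coloring. For $w\in[k]^n$ let $a_i(w)=|\{t:w_t=i\}|$ be its letter counts, and set $\lambda(w)=\sum_{i=1}^{k-1} i\,a_i(w)\in\{0,1,\dots,kn\}$. The $k$ points of any combinatorial line have count vectors $b+m e_1,\dots,b+m e_k$, where $m\ge 1$ is the number of wildcard coordinates; feeding the first $k-1$ counts into $\lambda$ sends these $k$ points to $s_0,\,s_0+m,\,\dots,\,s_0+(k-1)m$, a nondegenerate $k$-term arithmetic progression (common difference $m\ge 1$). Hence, if $\chi$ is any coloring of $\{0,\dots,kn\}$ with no monochromatic $k$-term progression, the composite coloring $w\mapsto \chi(\lambda(w))$ has no monochromatic line, so every color class is line-free. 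I would take $\chi$ from the Behrend--Rankin-type constructions \cite{behrend1946sets, rankin1965sets}: the interval $\{0,\dots,kn\}$ can be partitioned into $R=O(kn/r_k(kn))=\exp(O((\log n)^{1/\lceil \log_2 k\rceil}))$ classes free of $k$-term progressions, and this is precisely where the exponent $1/\lceil \log_2 k\rceil$ enters.

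Substituting this $R$ into $c_{n,k}\ge k^n/R$ yields the claimed bound in $\exp$-form; recovering the sharper algebraic shape $(r_k(\sqrt n)/\sqrt n)^{k-1}$ would instead use a product coloring over all $k-1$ count-coordinates together with the concentration of the multinomial counts in a window of width $O(\sqrt n)$ about $(n/k,\dots,n/k)$, which improves the effective range from $O(n)$ to $O(\sqrt n)$ — an adjustment that changes only the constant inside the $O(\cdot)$. The genuinely hard content of the theorem is therefore external to this argument: the whole exponent $1/\lceil \log_2 k\rceil$ is imported from the best known progression-free constructions, and everything else is the one-line verification that $\lambda$ carries each line to a nondegenerate progression. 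The main obstacle, and the place where real care is needed if one insists on the communication-complexity reading emphasized in this note, is turning the coloring into an actual NOF protocol for $Part_{n,k}$: no single player knows all the counts $a_i(w)$ (player $i$ cannot see the block $S_i=\{t:w_t=i\}$), so computing $\chi(\lambda(w))$ within the model requires the recursive, player-pairing coordination of Chandra, Furst and Lipton, which again rests on the same progression-free sets.
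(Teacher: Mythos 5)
Your argument is correct and is, at its core, the same as the paper's proof of its own version of this bound (Theorem~\ref{th:2}): the coloring $w\mapsto\chi(\lambda(w))$ you build is precisely the monochromatic partition induced by the paper's protocol for $Part_{n,k}$, which reduces to $Exactly_{n,k}$ on the letter counts $(|S_1|,\ldots,|S_k|)$ and then invokes the Chandra--Furst--Lipton coloring built from the same Behrend--Rankin progression-free sets. Two minor caveats: the standard random-translates covering gives $R=O\bigl(\tfrac{kn\log kn}{r_k(kn)}\bigr)$ rather than $O(kn/r_k(kn))$ (harmless inside the $\exp$), and your main construction delivers the paper's form $Ck^n\,r_k(kn)/(kn\log kn)$ rather than the literal $\bigl(r_k(\sqrt n)/\sqrt n\bigr)^{k-1}$ of the quoted Polymath statement, the latter needing the product-coloring refinement you only sketch.
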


We show analogues of the Hales-Jewett theorem, the density Hales-Jewett theorem, the above
lower bound and other related quantities, in the communication complexity framework.
The model used is the Number On the 
Forehead (NOF) model \cite{CFL83}. In this model
$k$ players compute together a boolean function 
$f:X_1\times\cdots \times X_k \to \{0,1\}$.
The input, $(x_1,x_2,\ldots,x_k) \in X_1\times\cdots \times X_k$, is presented to the players 
in such a way that the $i$-th player sees the entire input except $x_i$. A protocol is comprised of rounds, in each
of which every player writes one bit ($0$ or $1$) on a board that is visible to all players.
The choice of the written bit may
depend on the player's input and on all bits previously written by himself and others on the board. 
The protocol ends when all players know $f(x_1,x_2,\ldots,x_k)$.
The cost of a protocol is the number of bits written on the board, for the worst input. The deterministic
communication complexity of $f$, $D(f)$, is the cost of the best protocol for $f$.

Two key definitions in the number on the forehead model are a {\em cylinder} and 
a {\em cylinder intersection}. We say that $C \subseteq X_1\times\cdots \times X_k$
is a cylinder in the $i$-th coordinate if membership in $C$ does not depend on the $i$-th coordinate.
Namely, for every $y,y'$ and $x_1,x_2,\ldots,x_{i-1},x_{i+1},\ldots,x_k$ there holds 
$(x_1,x_2,\ldots,x_{i-1},y,x_{i+1},\ldots,x_k)\in C$ iff $(x_1,x_2,\ldots,x_{i-1},y',x_{i+1},\ldots,x_k)\in C$.
A cylinder intersection is a set $C$ of the form $C = \cap_{i=1}^k C_i$
where $C_i$ is a cylinder in the $i$-th coordinate.

Every $c$-bit communication protocol for a function $f$ partitions the input space
into at most $2^c$ cylinder intersections that are monochromatic with respect to 
$f$ (see \cite{KN97} for more details). Thus, one way to relax $D(f)$ is to view it as 
a coloring problem. Denote by $\ind(f)$ the largest size of a $1$-monochromatic cylinder intersection
with respect to $f$, and by $\chr(f)$ the least number of monochromatic cylinder intersections
that form a partition of $f^{-1}(1)$. Obviously, $D(f) \ge \log \chr(f)$, and as we shall see,
for special families of functions this bound is nearly tight, including the function $Part_{n, k}$ that we are interested in.
Also observe that $\chr(f) \ge |f^{-1}(1)|/\ind(f)$.

The function $Part_{n,k}: (2^{[n]})^k \to \{0,1\}$ is defined as follows, $Part_{n,k}(S_1,\ldots,S_k) = 1$ if and only if
$(S_1,\ldots,S_k)$ is a partition of $[n]$. In \cite{chattopadhyay2007languages} the Hales-Jewett theorem was used to prove that $D(Part_{n,k}) \ge \omega(1)$.
We observe that in fact the Hales-Jewett theorem is equivalent to this statement. This follows from the following strong
relation $Part_{n,k}$ has with the Hales-Jewett number.

\newpage
\begin{theorem}
\label{th:1}
For every $k \ge 3$ and $n \ge 1$ there holds:

\begin{enumerate}

\item $c_{n,k} = \ind(Part_{n,k})$, and

\item $\chr(Part_{n,k})$ is equal to the minimal number of
colors required to color $[k]^n$ so that there is no monochromatic combinatorial line.

\end{enumerate}

\end{theorem}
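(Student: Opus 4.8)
The plan is to set up an explicit bijection between the $1$-inputs of $Part_{n,k}$ and the words $[k]^n$, and then to match combinatorial lines on the word side with the cylinder-intersection structure on the communication side. A tuple $(S_1,\dots,S_k)$ with $Part_{n,k}(S_1,\dots,S_k)=1$ is a partition of $[n]$ into $k$ labelled (possibly empty) parts, which is exactly the data of a function $[n]\to[k]$, i.e.\ a word $w\in[k]^n$ via $w_j=\ell \iff j\in S_\ell$. Under this identification I would first record how a combinatorial line looks: a line with wildcard set $W\ne\emptyset$ and fixed values $m_j$ for $j\notin W$ becomes the family of partitions $S^{(i)}$, $i\in[k]$, in which the block $F_\ell=\{j\notin W : m_j=\ell\}$ is fixed for $\ell\ne i$ while $W$ is adjoined to the $i$-th block. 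The decisive feature is that the off-diagonal blocks $S^{(i)}_\ell=F_\ell$ (for $\ell\ne i$) do not depend on $i$.

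For the easy inequality $\ind(Part_{n,k})\le c_{n,k}$, given a cylinder intersection $C=\bigcap_i C_i\subseteq Part_{n,k}^{-1}(1)$ I let $A\subseteq[k]^n$ be the corresponding word-set and show $A$ is line-free by a mixing argument. If $A$ contained the line above, the partitions $S^{(1)},\dots,S^{(k)}$ all lie in $C$; since $C_i$ ignores coordinate $i$ and the $S^{(i)}$ share the common off-diagonal blocks $F_\ell$, the tuple $(F_1,\dots,F_k)$ agrees with $S^{(i)}$ off coordinate $i$ and hence lies in every $C_i$, so in $C$. But $\bigcup_\ell F_\ell=[n]\setminus W\ne[n]$, so $(F_1,\dots,F_k)$ is not a partition, contradicting $C\subseteq Part_{n,k}^{-1}(1)$.

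For the reverse inequality $\ind(Part_{n,k})\ge c_{n,k}$, given a line-free $A$ I would define $C_i$ to be the cylinder in coordinate $i$ consisting of all tuples whose blocks other than the $i$-th agree with some word of $A$, and then prove $\bigcap_i C_i$ equals $A$ and contains only partitions. The core is a position-by-position analysis of an arbitrary $(S_1,\dots,S_k)\in\bigcap_i C_i$: choosing witnesses $w^{(i)}\in A$ (agreeing with $S$ on every block except block $i$) and setting $P_j=\{\ell : j\in S_\ell\}$, the defining relations force $|P_j|\le 1$ for every $j$ — and this is precisely the step that uses $k\ge3$, since ruling out two distinct labels in $P_j$ requires an index $i$ lying outside both. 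One then checks that the covered positions carry a common value in all the $w^{(i)}$, while the uncovered set $W=\{j : P_j=\emptyset\}$ satisfies $w^{(i)}_j=i$; hence either $W=\emptyset$ and $(S_1,\dots,S_k)$ is the partition of a word of $A$, or $W\ne\emptyset$ and the $w^{(i)}$ form a combinatorial line in $A$, contradicting line-freeness. Pinning down that no spurious non-partition survives the intersection is where I expect the real work to be.

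Finally, part (2) is a formal consequence of these two constructions. Together they show that the subsets of $[k]^n$ realizable as the word-set of a cylinder intersection contained in $Part_{n,k}^{-1}(1)$ are exactly the line-free sets, and that the correspondence is compatible with taking partitions of $Part_{n,k}^{-1}(1)$. Therefore partitioning $Part_{n,k}^{-1}(1)$ into monochromatic cylinder intersections is the same as partitioning $[k]^n$ into line-free classes, i.e.\ colouring $[k]^n$ so that no combinatorial line is monochromatic, with the number of parts preserved on the two sides. Minimizing over all such partitions then yields the stated equality for $\chr(Part_{n,k})$.
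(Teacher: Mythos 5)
Your proof is correct and takes essentially the same route as the paper: the same bijection between partitions of $[n]$ and words in $[k]^n$, with combinatorial lines corresponding exactly to stars (your tuple $(F_1,\dots,F_k)$ is precisely the star's center, and your $|P_j|\le 1$ step uses $k\ge 3$ in the same way the paper does). The only difference is presentational: you re-derive inline the standard characterization of monochromatic cylinder intersections as star-free sets (the content of Lemma~\ref{cylinders_and_stars} and Lemma~\ref{lem_cyl_int_gf}), which the paper instead invokes as cited lemmas.
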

Theorem~\ref{th:1} entails an alternative characterization of the Hales-Jewett theorem and its
density version:
\begin{theorem}[Hales-Jewett theorem]
\label{hj_cc}
For every fixed $k\ge 3$, one has $D(Part_{n,k}) = \omega(1)$.
\end{theorem}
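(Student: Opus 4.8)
The plan is to combine the characterization of $\chr(Part_{n,k})$ furnished by Theorem~\ref{th:1} with the (coloring form of the) Hales-Jewett theorem, feeding both into the elementary bound $D(f) \ge \log \chr(f)$ recalled in the introduction. Since the assertion $D(Part_{n,k}) = \omega(1)$ only says that the complexity is unbounded in $n$ for each fixed $k$, it suffices to show that $\chr(Part_{n,k}) \to \infty$ as $n \to \infty$, and then transfer this growth to $D$ via the logarithmic bound.

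First I would invoke part~2 of Theorem~\ref{th:1}, which identifies $\chr(Part_{n,k})$ with the least number $r$ of colors needed to color $[k]^n$ with no monochromatic combinatorial line. This is precisely the quantity controlled by the Hales-Jewett theorem. The key step is then a contrapositive reading of that theorem: fix an integer $r$ and suppose $n \ge HJ(k,r)$. If $[k]^n$ admitted an $r$-coloring with no monochromatic line, this would contradict the conclusion of the Hales-Jewett theorem, which forces a monochromatic line for \emph{every} $r$-coloring once $n \ge HJ(k,r)$. Hence for such $n$ no $r$-coloring avoids a monochromatic line, so the minimal number of colors is at least $r+1$; that is, $\chr(Part_{n,k}) \ge r+1$.

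To finish, I would make the growth explicit. Given any target constant $M$, choose an integer $r \ge 2^M$; then for every $n \ge HJ(k,r)$ we obtain
\[
D(Part_{n,k}) \;\ge\; \log \chr(Part_{n,k}) \;\ge\; \log (r+1) \;>\; \log 2^M \;=\; M .
\]
Since $M$ was arbitrary and $HJ(k,r)$ is a fixed finite threshold for each $r$, this shows that $D(Part_{n,k})$ eventually exceeds every constant, i.e. $D(Part_{n,k}) = \omega(1)$.

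Because the substantive work is already packaged in Theorem~\ref{th:1}, I do not anticipate a serious obstacle; the argument is a direct translation of Hales-Jewett through the $\chr$-characterization. The one point demanding care is the direction of the color-counting inequality: one must read the Hales-Jewett theorem as a \emph{lower} bound on the number of colors (``$r$ colors do not suffice, so at least $r+1$ are needed''), and not confuse the unboundedness of $\chr$ with the quantitatively stronger density statement. It is worth emphasizing that only the Ramsey (coloring) form of Hales-Jewett enters here; the density version underlies the sharper identity $c_{n,k} = \ind(Part_{n,k})$ but is not required merely to establish that $D$ is unbounded.
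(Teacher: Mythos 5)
Your proof is correct and follows essentially the same route as the paper: part~2 of Theorem~\ref{th:1} identifies $\chr(Part_{n,k})$ with the minimal number of colors avoiding a monochromatic combinatorial line, the coloring form of Hales-Jewett forces this quantity to infinity, and $D(f) \ge \log \chr(f)$ transfers the growth to the communication complexity. The only difference is that the paper packages this inside a two-sided equivalence (using the upper bound $D(f) \le \lceil \log \chr(f)\rceil + k$ from Theorem~\ref{D_chr_gf} for the converse direction), whereas you correctly observe that the stated result needs only the elementary lower bound.
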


\begin{theorem}[Density Hales-Jewett theorem]
\label{dhj_cc}
For every $k\ge 3$ there holds $$\lim_{n\to \infty} \ind(Part_{n,k})/k^n = 0.$$
\end{theorem}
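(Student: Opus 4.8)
The plan is to deduce this statement directly from Theorem~\ref{th:1} together with the density Hales-Jewett theorem quoted at the outset, so that the two become formally equivalent rather than one being merely a consequence of the other. First I would invoke part~1 of Theorem~\ref{th:1}, which gives $\ind(Part_{n,k}) = c_{n,k}$ for every $k \ge 3$. Substituting this equality, the assertion to be proved reads $\lim_{n \to \infty} c_{n,k}/k^n = 0$, so the entire task reduces to matching this limit with the $\delta$-$N$ formulation of the density Hales-Jewett theorem stated earlier.

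The remaining step is a routine unpacking of quantifiers. For the implication actually needed, I would fix an arbitrary $\delta > 0$ and set $N = DHJ(k,\delta)$. For every $n \ge N$ the density Hales-Jewett theorem guarantees that any subset of $[k]^n$ of size at least $\delta k^n$ contains a combinatorial line; hence any line-free subset has size strictly less than $\delta k^n$. Since $c_{n,k}$ is by definition the maximal size of a line-free subset, this yields $c_{n,k} < \delta k^n$, that is, $c_{n,k}/k^n < \delta$, for all $n \ge N$. As $\delta$ was arbitrary, the limit is zero.

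To justify calling the statement an alternative characterization, I would also record the converse direction: if $c_{n,k}/k^n \to 0$, then fixing $\delta > 0$ one may choose $N$ with $c_{n,k} < \delta k^n$ for all $n \ge N$, so that any subset of size at least $\delta k^n$ exceeds the maximal line-free size and must therefore contain a line; setting $DHJ(k,\delta) = N$ recovers the density Hales-Jewett theorem. I expect essentially no obstacle here: all of the mathematical content is carried by the (deep) density Hales-Jewett theorem and by the combinatorial correspondence established in Theorem~\ref{th:1}. The only point demanding care is getting the direction of the quantifier translation right, namely that the maximality of $c_{n,k}$ is exactly what converts the ``every large set contains a line'' statement into the ``every line-free set is small'' bound, and back again in the converse.
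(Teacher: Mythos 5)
Your proposal is correct and follows exactly the paper's route: the paper derives Theorem~\ref{dhj_cc} immediately from part~1 of Theorem~\ref{th:1} (the identity $c_{n,k}=\ind(Part_{n,k})$) together with the $\delta$--$N$ formulation of the density Hales-Jewett theorem, which is precisely your argument. The quantifier unpacking and the converse direction you spell out are the routine details the paper leaves implicit.
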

Given the central role the Hales-Jewett theorem plays in Ramsey Theory, and the intricacy of its proof, 
it would be very nice to find a proof of the Hales-Jewett theorem in the framework of communication complexity.

The relation between $c_{n,k}$ and communication complexity also suggests a way to prove a 
lower bound on $c_{n,k}$: prove an efficient communication protocol for $Part_{n,k}$. We show indeed that $D(Part_{n,k}) \le O\left(\log n\right)^{1/\lceil \log_2 k \rceil}$,
and thus the lower bound follows. 
We prove the relationship between $c_{n,k}$ and $\ind(Part_{n,k})$ in Section~\ref{sec:1}, and the lower bound on $c_{n,k}$ is proved in 
Section~\ref{sec:2}. Lastly, Section~\ref{sec:3} contains a discussion on Fujimura sets, mentioned in ~\cite{polymath2010moser}, 
and their communication complexity analogues.

\section{A communication complexity version of Hales-Jewett}
\label{sec:1}

We start with the definition of a {\em star}: A star is a subset of $X_1 \times \cdots \times X_k$ of the form
\[\{(x'_1, x_2, \ldots, x_k), (x_1,x'_2, \ldots, x_k), \ldots, (x_1,x_2,\ldots, x'_k)\},\]
where $x_i \ne x'_i$ for each $i$. We refer to $(x_1, x_2, \ldots, x_k)$ as the star's {\em center}.
Cylinder intersections can be easily characterized in terms of stars.
\begin{lemma}[\cite{KN97}]
\label{cylinders_and_stars}
A subset $C \subseteq X_1 \times \cdots \times X_k$ is a cylinder intersection if and only if for every
star that is contained in $C$, its center also belongs to $C$.
\end{lemma}

The function $Part_{n,k}$ has the property that for every $S_1,\ldots,S_{k-1} \in 2^{[n]}$ there 
is at most one set $S \subset [n]$ such that $Part_{n,k}(S_1,\ldots,S_{k-1},S) = 1$. We call such a function
a {\em weak graph function}, as opposed to a {\em graph function} \cite{BDPW07} where there is always exactly one such $S$.

Graph functions have some particularly convenient properties, one of which is that  
$1$-monochromatic cylinder intersections are characterized simply by the existence of stars, as proved in \cite{hdp17}. The same proof also works for 
weak graph functions and gives: 
\begin{lemma}[\cite{hdp17}]
\label{lem_cyl_int_gf}
Let $f: X_1 \times \cdots \times X_k \to \{0,1\}$ be a weak graph function and $C \subseteq f^{-1}(1)$.
The set $C$ is a ($1$-monochromatic) cylinder intersection with respect to $f$ if and only if
it does not contain a star.
\end{lemma}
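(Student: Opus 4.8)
The plan is to deduce the statement directly from the star-characterization of cylinder intersections (Lemma~\ref{cylinders_and_stars}) together with the weak graph property of $f$. Recall that Lemma~\ref{cylinders_and_stars} says that $C$ is a cylinder intersection exactly when every star contained in $C$ also has its center in $C$. So the whole argument reduces to showing that, for a weak graph function and $C \subseteq f^{-1}(1)$, the condition ``$C$ contains no star'' is equivalent to the condition ``every star contained in $C$ has its center in $C$.''

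One direction is immediate, and in fact vacuous. If $C$ contains no star at all, then the hypothesis of Lemma~\ref{cylinders_and_stars}---that every star contained in $C$ has its center in $C$---holds with nothing to verify, and hence $C$ is a cylinder intersection. No use of the weak graph property is needed here.

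For the other direction I would argue by contradiction. Suppose $C$ is a cylinder intersection that nonetheless contains a star with center $(x_1, \ldots, x_k)$ and legs $(x_1, \ldots, x'_i, \ldots, x_k)$ for $i = 1, \ldots, k$. By Lemma~\ref{cylinders_and_stars} the center $(x_1, \ldots, x_k)$ must lie in $C$, and since $C \subseteq f^{-1}(1)$ this gives $f(x_1, \ldots, x_k) = 1$. Now consider the last leg $(x_1, \ldots, x_{k-1}, x'_k)$, which also lies in $C \subseteq f^{-1}(1)$, so $f(x_1, \ldots, x_{k-1}, x'_k) = 1$ as well. But $x_k \ne x'_k$, so upon fixing the first $k-1$ coordinates to $x_1, \ldots, x_{k-1}$ we have exhibited two distinct values of the last coordinate on which $f$ takes the value $1$, contradicting the weak graph property. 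Hence a cylinder intersection contained in $f^{-1}(1)$ can contain no star, which completes the equivalence.

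The argument is short and I do not expect a serious obstacle; the only point requiring care is recognizing that the weak graph property is precisely what forbids the ``center plus one leg'' configuration that Lemma~\ref{cylinders_and_stars} would otherwise permit inside a cylinder intersection. Note also that although the definition of a weak graph function singles out the last coordinate, the contradiction above invokes only the last leg of the star, so the property as stated is exactly what is needed; for $Part_{n,k}$ any coordinate would do equally well, since fixing any $k-1$ of the sets determines the remaining one as the complement of their union.
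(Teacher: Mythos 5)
Your proof is correct and follows essentially the same route as the paper: the ``no star $\Rightarrow$ cylinder intersection'' direction is vacuous via Lemma~\ref{cylinders_and_stars}, and the converse uses the weak graph property on the last coordinate to rule out the center and the last leg both lying in $f^{-1}(1)$. The only cosmetic difference is that you argue by contradiction where the paper directly concludes $f(x_1,\ldots,x_k)=0$ and hence that the center is missing from $C$.
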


\begin{proof}
If $C$ does not contain a star, then $C$ is a cylinder intersection by Lemma~\ref{cylinders_and_stars}.
On the other hand, if $C$ contains a star whose center is $(x_1, x_2,\ldots,x_k)$, then
by definition of a weak graph function $f(x_1,x_2,\ldots,x_k)=0$. Thus,
$C$ does not contain the center of star, and therefore $C$ is not a cylinder
intersection (again using Lemma~\ref{cylinders_and_stars}).
\end{proof}

\begin{proof}[of Theorem~\ref{th:1}]
As in \cite{chattopadhyay2007languages}, define a bijection $\psi$ from $Part_{n,k}^{-1}(1)$ to $[k]^n$. A $k$-tuple $(S_1,\ldots,S_k)$
is mapped to $(j_1,\ldots,j_n) \in [k]^n$ where $j_i$ is the   
index of the set $S_{j_i}$ that contains $i$. Since $S_1,\ldots,S_k$ form a partition of $[n]$ this map is a bijection.

Now consider a $1$-monochromatic star $(S'_1,\ldots,S_k),\ldots,(S_1,\ldots,S'_k)$ with respect to $Part_{n,k}$.
Since this star is $1$-monochromatic, it implies that in each of the families $(S'_1,\ldots,S_k),\ldots,(S_1,\ldots,S'_k)$, all 
subsets are pairwise disjoint. As a result, because $k\ge 3$, we get that the subsets $(S_1,\ldots,S_k)$ are also pairwise disjoint. 
This determines $S'_j$ uniquely: $S'_j = S_j \cup ([n]\setminus (\cup_{i=1}^k S_i))$, for every $j=1,\ldots,k$. 
Therefore, if we consider $\psi(S'_1,\ldots,S_k),\ldots, \psi(S_1,\ldots,S'_k)$ and place them in a $k\times n$
table, then the columns of this table all belong to $\{ (x,x,\ldots,x) : x\in [k] \} \cup \{ (1,2,\ldots,k) \}$. 
The $i$-th column of this table is in $\{ (x,x,\ldots,x) : x\in [k] \}$ if $i \in S_j$ for some $j \in [k]$ and
otherwise the $i$-th column is equal to $(1,2,\ldots,k)$.
Thus the stars in $Part_{n,k}^{-1}(1)$ are mapped to combinatorial lines in $[k]^n$.

On the other hand, consider a combinatorial line in $[k]^n$ given by a $k\times n$ matrix $L$.
Let $L_1, \ldots, L_k$ be the rows of $L$, then it is not hard to check that similarly to the above,
$\psi^{-1}(L_1),\ldots,\psi^{-1}(L_k)$ form a $1$-monochromatic star with respect to $Part_{n,k}$.
The center of this star is $(S_1,\ldots,S_k)$ where $S_j$ contains all indices of columns that are equal
to $(j,j,\ldots,j)$.

Hence the stars in $Part_{n,k}^{-1}(1)$ are in one-to-one correspondence with combinatorial lines in $[k]^n$. 
It follows that $c_{n,k} = \ind(Part_{n,k})$, and that $\chr(Part_{n,k})$ is equal to the minimal number of
colors required to color $[k]^n$ so that there is no monochromatic combinatorial line.
\end{proof}

It is left to show the equivalence between Theorem~\ref{hj_cc} and the Hales-Jewett theorem,
and Theorem~\ref{dhj_cc} with its density version. 
The latter equivalence follows immediately from
part 1 of Theorem~\ref{th:1}. The equivalence of Theorem~\ref{hj_cc} to the Hales-Jewett theorem 
follows from part 2 of Theorem~\ref{th:1}, and the following theorem:
\begin{theorem}[\cite{hdp17}]
\label{D_chr_gf}
For every weak graph function $f:X_1 \times \cdots \times X_k \to \{0,1\}$, there holds
$$
\log \chr(f) \le D(f) \le \lceil \log \chr(f) \rceil + k.
$$
\end{theorem}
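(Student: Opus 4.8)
The plan is to establish the two inequalities separately. The left inequality $\log \chr(f) \le D(f)$ is the standard fact already recorded in the introduction: any protocol of cost $c$ partitions the input space into at most $2^c$ monochromatic cylinder intersections, and retaining those that are $1$-monochromatic exhibits a partition of $f^{-1}(1)$ into at most $2^{D(f)}$ monochromatic cylinder intersections, whence $\chr(f) \le 2^{D(f)}$. So the real work is the upper bound $D(f) \le \lceil \log \chr(f)\rceil + k$, which I would prove by exhibiting an explicit protocol.

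Fix an optimal partition of $f^{-1}(1)$ into $m = \chr(f)$ monochromatic cylinder intersections $C_1, \dots, C_m$, and write each one, by definition of a cylinder intersection, as $C_j = \bigcap_{i=1}^k C_j^i$ where $C_j^i$ is a cylinder in the $i$-th coordinate. The protocol would have two phases. In the first phase I exploit the weak graph property with respect to the last coordinate: player $k$, who sees $x_1, \dots, x_{k-1}$, computes the unique value $y$ (if any) with $f(x_1, \dots, x_{k-1}, y) = 1$, locates the block $C_j$ containing $(x_1, \dots, x_{k-1}, y)$, and writes its index $j$ on the board using $\lceil \log m\rceil$ bits; if no such $y$ exists, player $k$ simply writes the default index $j=1$. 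In the second phase each player $i$ writes a single bit recording whether the actual input $(x_1, \dots, x_k)$ lies in $C_j^i$, which is computable by player $i$ precisely because $C_j^i$ does not depend on the $i$-th coordinate, the one player $i$ cannot see. The output is declared to be $1$ exactly when all $k$ second-phase bits equal $1$, i.e. when the input lies in $C_j$. The total cost is $\lceil \log \chr(f)\rceil + k$.

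For correctness I would argue both directions. If $f(x_1, \dots, x_k) = 1$, then the weak graph property forces $y = x_k$, so the point player $k$ tested is the input itself; it lies in some block $C_j$, player $k$ announces that $j$, and since $C_j = \bigcap_i C_j^i$ all $k$ verification bits are $1$ and the protocol outputs $1$. Conversely, if $f(x_1, \dots, x_k) = 0$, then whatever index $j$ was announced (including the default case where no completion exists), the input cannot lie in $C_j \subseteq f^{-1}(1)$; hence it fails membership in at least one $C_j^{i}$, the corresponding player writes $0$, and the protocol outputs $0$.

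The step I expect to be the main obstacle is the clean handling of inputs for which $(x_1,\dots,x_{k-1})$ has no $1$-completion. The temptation is to spend an extra symbol, and thereby risk an extra bit turning $\lceil \log m\rceil$ into $\lceil \log(m+1)\rceil$, to flag this case; the point to get right is that this is unnecessary, because such inputs are automatically $0$-inputs and soundness on $0$-inputs is guaranteed by $C_j \subseteq f^{-1}(1)$ alone, independently of how $j$ is chosen. Reusing an existing index as the default therefore keeps the first phase at exactly $\lceil \log \chr(f)\rceil$ bits. A secondary bookkeeping point is to confirm that the whole exchange fits the round-based NOF cost measure, so that the second phase contributes exactly $k$ bits and the stated bound is attained.
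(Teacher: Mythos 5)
Your proof is correct, and the upper-bound protocol differs from the paper's in a genuine way. Both arguments use the weak graph property at the same point: the last player reconstructs the unique candidate $1$-input $(x_1,\dots,x_{k-1},y)$ and broadcasts which block of an optimal partition of $f^{-1}(1)$ it falls in (the paper spends an explicit existence bit plus $\lceil\log\chr(f)\rceil$ colour bits and aborts early when no completion exists, whereas you fold the no-completion case into a default index; both come to $\lceil\log\chr(f)\rceil+k$ total). The difference is in the verification phase. The paper has each of the first $k-1$ players test whether \emph{some} completion in their own coordinate yields a $1$-input of the announced colour, and soundness then rests on Lemma~\ref{lem_cyl_int_gf}: if every player finds such a completion yet the true input is a $0$-input, the colour class contains a star whose center is a $0$-input, contradicting star-freeness. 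You instead fix, for each block, a decomposition $C_j=\bigcap_{i=1}^k C_j^i$ into cylinders and have player $i$ directly test membership of the input in $C_j^i$ (legitimate since $C_j^i$ does not depend on the coordinate player $i$ cannot see); soundness is then immediate from $C_j\subseteq f^{-1}(1)$, with no appeal to the star characterization. Your route is more elementary in that it uses only the definition of a cylinder intersection rather than Lemma~\ref{lem_cyl_int_gf}; the paper's route has the mild advantage that the protocol is described purely in terms of the colouring, without committing to a cylinder decomposition of each colour class. Your handling of the default index is also right: soundness on $0$-inputs holds for whatever $j$ is announced, so no extra symbol is needed and the first phase stays at exactly $\lceil\log\chr(f)\rceil$ bits.
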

Theorem~\ref{D_chr_gf} was proved in \cite{hdp17} for graph functions. The same proof
with a minor change works for weak graph functions. For completeness we add the proof.
\begin{proof}
The lower bound is standard and holds for every function, 
thus it is only required to prove the upper bound.
Fix a $\chr(f)$-coloring of $f^{-1}(1)$ where every color class is star-free.
On input $x_1,x_2,\ldots,x_{k-1},y$, the last player 
first checks and announces whether there is a value 
$y'$ such that $f(x_1,x_2,\ldots,x_{k-1},y')=1$, using $1$ bit.
If $y'$ exists, the last player then computes and publishes the color $b$ of
$(x_1,x_2,\ldots,x_{k-1},y')$. 
If $y'$ does not exist, the protocol ends with value $0$.
Note that since $f$ is a weak graph function, if $y'$ exists then it is unique.

Then, for each $i=1,\ldots,k-1$, player $P_i$ checks whether there is a value $x_i'$ such that $f(x_1,x_2,\ldots,x_{i-1},x_i',x_{i+1},..,y)=1$
and $(x_1,x_2,\ldots,x_{i-1},x_i',x_{i+1},..,y)$ is colored $b$. He writes $1$ on the board if such
an $x_i'$ exists and writes $0$ otherwise. The protocol's value is $1$ if and only if all players wrote $1$ on the board.

The total number of bits communicated in this protocol is $\lceil \log \chr(f) \rceil + k$. We turn to prove that the
protocol is correct. When $f(x_1,x_2,\ldots,x_{k-1},y)=1$, the protocol clearly outputs $1$.
Now suppose that it outputs $1$, even though $f(x_1,x_2,\ldots,x_{k-1},y)=0$. This means
that there is a choice of $x_1',x_2',\ldots,y'$ for which
\[f(x_1',x_2,\ldots,x_{k-1},y)=f(x_1,x_2',\ldots,x_{k-1},y)=\ldots =f(x_1,x_2,\ldots,x_{k-1},y')=1,\]
and all points are in the same color set.
But then this color set in $f^{-1}(1)$ cannot
constitute a star-free set, a contradiction.
\end{proof}

\section{A lower bound on $c_{n,k}$}
\label{sec:2}

In this section we prove the following lower bound, similar to that of \cite{polymath2010moser}:
\begin{theorem}
\label{th:2}
For each $k \ge 3$, there is an absolute constant $C > 0$ such that
\[
c_{n,k} \ge Ck^n \frac{r_k(kn)}{kn \log kn}  = k^n \exp\left(-O(\log n)^{1/\lceil \log_2 k \rceil}\right).
\]
\end{theorem}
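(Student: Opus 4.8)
The plan is to read off the bound from the dictionary established in Theorem~\ref{th:1}, reducing everything to a coloring question about $[k]^n$. Combining the inequality $\chr(f) \ge |f^{-1}(1)|/\ind(f)$ noted in the introduction with part~1 of Theorem~\ref{th:1} and the fact that $|Part_{n,k}^{-1}(1)| = k^n$ (the bijection $\psi$ matches each partition with a distinct word of $[k]^n$), I obtain
\[
c_{n,k} = \ind(Part_{n,k}) \ge \frac{k^n}{\chr(Part_{n,k})}.
\]
By part~2 of Theorem~\ref{th:1}, $\chr(Part_{n,k})$ equals the least number of colors needed to color $[k]^n$ with no monochromatic combinatorial line. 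So it suffices to exhibit such a coloring using only $O\!\left(\frac{kn\log kn}{r_k(kn)}\right)$ colors; plugging that bound into the display above gives the first inequality of the theorem.

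To build the coloring I would project $[k]^n$ onto the integers by the sum map $\sigma(w) = \sum_{i=1}^n w_i$, which takes values in $\{n, n+1, \ldots, kn\}$. The key observation is that $\sigma$ sends combinatorial lines to arithmetic progressions: if $w^{(1)}, \ldots, w^{(k)}$ form a line with wildcard set $W$ of size $t = |W| \ge 1$, then on the constant columns every $w^{(j)}$ contributes the same amount $B$, while each of the $t$ diagonal columns contributes $j$ to $\sigma(w^{(j)})$, so $\sigma(w^{(j)}) = B + tj$. Thus $\sigma(w^{(1)}), \ldots, \sigma(w^{(k)})$ is a $k$-term arithmetic progression with common difference $t \ge 1$, hence nondegenerate. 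Consequently, if $\rho$ is any coloring of $\{n, \ldots, kn\}$ with no monochromatic $k$-term progression, then $\rho \circ \sigma$ colors $[k]^n$ with no monochromatic line, using exactly as many colors as $\rho$.

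It then remains to partition $\{n, \ldots, kn\} \subseteq [kn]$ into few $k$-progression-free sets. Fix an extremal $k$-progression-free set $A \subseteq [kn]$ with $|A| = r_k(kn)$. A standard probabilistic covering argument does the job: a uniformly random translate of $A$ (each of which is again progression-free) covers any fixed point with probability about $r_k(kn)/kn$, so taking $O\!\left(\frac{kn}{r_k(kn)}\log kn\right)$ translates covers all of $[kn]$; refining the cover to a partition only shrinks classes, and subsets of progression-free sets stay progression-free. This yields $\chr(Part_{n,k}) = O\!\left(\frac{kn\log kn}{r_k(kn)}\right)$ and hence $c_{n,k} \ge Ck^n\frac{r_k(kn)}{kn\log kn}$. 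The claimed closed form is exactly the Behrend--Rankin bound $r_k(m) = m\exp\!\left(-O(\log m)^{1/\lceil \log_2 k\rceil}\right)$ of \cite{behrend1946sets, rankin1965sets} (see also \cite{elkin2010improved, green2010note, o2011sets}): with $m = kn$ the extra factor $1/\log kn$ is absorbed into $\exp\!\left(-O(\log n)^{1/\lceil \log_2 k\rceil}\right)$, since for $k \ge 3$ the exponent $1/\lceil \log_2 k\rceil \le 1/2$ makes $(\log n)^{1/\lceil \log_2 k\rceil}$ dominate $\log\log n$.

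The heart of the argument is the projection $\sigma$ together with the observation that it turns combinatorial lines into genuine arithmetic progressions; once this is in hand the problem collapses to the classical task of partitioning an interval into progression-free sets, for which the Behrend--Rankin constructions are tailor-made, so I do not expect a serious obstacle there. The only place one loses ground is the covering step, which costs an extra $\log kn$ factor and is the reason the bound is stated through $r_k(kn)$ rather than matching \cite{polymath2010moser} exactly. I would also remark that, via Theorem~\ref{D_chr_gf}, the very same coloring gives a protocol of cost $\lceil \log \chr(Part_{n,k})\rceil + k = O(\log n)^{1/\lceil \log_2 k\rceil}$, recovering the efficient protocol for $Part_{n,k}$ promised in the introduction.
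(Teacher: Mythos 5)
Your proposal is correct, and the endgame is identical to the paper's: both arguments run through $c_{n,k}=\ind(Part_{n,k})\ge |Part_{n,k}^{-1}(1)|/\chr(Part_{n,k}) = k^n/\chr(Part_{n,k})$ using Theorem~\ref{th:1}. Where you diverge is in how the upper bound $\chr(Part_{n,k}) \le O\left(\frac{kn\log kn}{r_k(kn)}\right)$ is obtained. The paper derives it from communication complexity: it first proves Lemma~\ref{lem:protocol_partnk}, a protocol for $Part_{n,k}$ obtained by reducing to $Exactly_{n,k}$ (three bits of disjointness checks plus the Chandra--Furst--Lipton protocol), and then applies $\chr(f)\le 2^{D(f)}$. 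You instead invoke part~2 of Theorem~\ref{th:1} and directly construct a line-free coloring of $[k]^n$: the sum projection $\sigma(w)=\sum_i w_i$ sends every combinatorial line with wildcard set of size $t\ge 1$ to a genuine $k$-term arithmetic progression of common difference $t$, and a random-translate covering argument partitions $\{n,\ldots,kn\}$ into $O\left(\frac{kn\log kn}{r_k(kn)}\right)$ progression-free classes. This is sound (the only care needed is ranging the shifts over an interval of length $O(kn)$ so every point is covered with probability $\Omega(r_k(kn)/kn)$, and noting that translates and subsets of progression-free sets remain progression-free), and it is essentially the combinatorial content of the Chandra--Furst--Lipton protocol unrolled, since under the bijection $\psi$ one has $\sigma(\psi(S_1,\ldots,S_k))=\sum_j j|S_j|$. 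What your route buys is self-containedness: you never need the protocol, the inequality $\log\chr(f)\le D(f)$, or the black-box citation of \cite{CFL83, beigel2006multiparty}, and as you note the protocol can then be recovered afterwards from the coloring via Theorem~\ref{D_chr_gf}. What the paper's route buys is its thesis: the point of the note is precisely that the lower bound on $c_{n,k}$ ``follows from a protocol for $Part_{n,k}$,'' so the protocol is put front and center and the coloring is implicit. Quantitatively the two arguments give the same bound, including the $\log kn$ loss from the covering step.
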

We first give an efficient protocol for $Part_{n,k}$, and then explain how it implies
Theorem~\ref{th:2}. 
\begin{lemma}
\label{lem:protocol_partnk}
For every fixed $k \ge 3$ it holds that
\[
D(Part_{n,k}) \le O\left(\log \frac{kn \log kn}{r_k(kn)} \right)  = O(\log n)^{1/\lceil \log_2 k \rceil}.
\]
\end{lemma}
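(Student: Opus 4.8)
The plan is to produce the protocol indirectly: I would exhibit an economical coloring and then invoke Theorem~\ref{D_chr_gf}. Since $Part_{n,k}$ is a weak graph function, Theorem~\ref{D_chr_gf} gives $D(Part_{n,k}) \le \ceil{\log \chr(Part_{n,k})} + k$, and by part~2 of Theorem~\ref{th:1}, $\chr(Part_{n,k})$ equals the minimal number of colors needed to color $[k]^n$ with no monochromatic combinatorial line. Hence it suffices to construct a line-free coloring of $[k]^n$ using only $t = O\!\left(\frac{kn\log kn}{r_k(kn)}\right)$ colors; feeding such a coloring into the protocol of Theorem~\ref{D_chr_gf} then yields the claimed bound, since for fixed $k$ the additive $+k$ and the constants are absorbed into the $O(\cdot)$.

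To build the coloring I would reduce line-freeness in $[k]^n$ to AP-freeness on a short interval of integers. Consider the weight map $\phi(x) = \sum_{i=1}^n x_i$, which sends $[k]^n$ into the interval $\{n, n+1, \ldots, kn\}$ of size $(k-1)n+1 \le kn$. A combinatorial line is specified by a nonempty set $W$ of moving coordinates together with fixed values off $W$, and its $a$-th point has $\phi$-value $\left(\sum_{i \notin W} x_i\right) + a\,|W|$. Thus the $k$ points of any line map to a $k$-term arithmetic progression with common difference $|W| \ge 1$. Consequently, if $\chi$ is any coloring of $\{n, \ldots, kn\}$ with no monochromatic $k$-term AP, then $x \mapsto \chi(\phi(x))$ is a line-free coloring of $[k]^n$. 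The problem is therefore reduced to coloring an interval of at most $kn$ integers so that no color class contains a $k$-term AP.

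The main technical step, and the place where the extra logarithmic factor appears, is a covering lemma: an interval $I$ of $m$ integers can be colored with $O\!\left(\frac{m}{r_k(m)}\log m\right)$ colors, each color class free of $k$-term APs. I would prove this by a greedy set-cover argument. Fix a maximum $k$-AP-free subset $A \subseteq I$ with $|A| = r_k(m)$; every translate $(A+t)\cap I$ is again $k$-AP-free (translation maps APs bijectively to APs, and passing to a subset preserves AP-freeness), and each point of $I$ lies in exactly $r_k(m)$ of these translates, so the uniform fractional cover by the $O(m)$ relevant translates has value $O(m/r_k(m))$. The greedy algorithm then covers $I$ by $O\!\left(\frac{m}{r_k(m)}\log m\right)$ translates, and assigning to each point the index of one covering translate gives the desired coloring into AP-free classes. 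Taking $m = (k-1)n+1 \le kn$ and using monotonicity of $r_k$ yields $t = O\!\left(\frac{kn\log kn}{r_k(kn)}\right)$.

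Combining the three steps gives $D(Part_{n,k}) \le \ceil{\log t} + k = O\!\left(\log \frac{kn \log kn}{r_k(kn)}\right)$. The second equality in the statement then follows by substituting the known Behrend/Rankin-type lower bound $r_k(m) \ge m\exp\!\left(-O(\log m)^{1/\ceil{\log_2 k}}\right)$, under which $\log\frac{kn\log kn}{r_k(kn)} = O(\log n)^{1/\ceil{\log_2 k}}$, the $\log\log$ term being absorbed. I expect the covering lemma to be the only nonroutine part: the line-to-AP computation and the appeal to Theorem~\ref{D_chr_gf} are mechanical, whereas obtaining the $\log m$ factor---and verifying that translation and restriction preserve AP-freeness---is exactly the content of the set-cover argument above.
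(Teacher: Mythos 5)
Your proof is correct, but it takes a genuinely different route from the paper's. The paper gives an explicit protocol: three bits to check pairwise disjointness of the $S_i$ (distributed among players so each check is visible to someone), followed by a black-box invocation of the Chandra--Furst--Lipton protocol for $Exactly_{n,k}$ on the sizes $|S_1|,\ldots,|S_k|$, citing Beigel--Gasarch--Glenn for the translation of Rankin's bound into the final form. You instead route through the coloring characterization: Theorem~\ref{D_chr_gf} plus part~2 of Theorem~\ref{th:1} reduce the task to exhibiting a line-free coloring of $[k]^n$, which you build by pushing lines forward to $k$-term APs under the weight map and then proving the translate-plus-greedy-set-cover lemma from scratch. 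The underlying combinatorics is the same --- your covering lemma \emph{is} the content of the CFL protocol the paper cites --- but your packaging is self-contained, makes the $\log$ factor's origin explicit, and fits the paper's theme by directly producing the line-free coloring of $[k]^n$ that Theorem~\ref{th:1} is about; the paper's version is shorter and leans on the literature. One small repair: your final step ``taking $m=(k-1)n+1$ and using monotonicity of $r_k$'' goes the wrong way, since $r_k(m)\le r_k(kn)$ makes $m\log m/r_k(m)$ potentially \emph{larger} than $kn\log kn/r_k(kn)$; either apply the covering lemma to a full interval of length $kn$ containing $\{n,\ldots,kn\}$, or invoke subadditivity $r_k(a+b)\le r_k(a)+r_k(b)$ to get $r_k(m)\ge \Omega\bigl((m/kn)\,r_k(kn)\bigr)$. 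Either fix is one line.
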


\begin{proof} 
The protocol uses a known reduction to the Exactly-$n$ function, see e.g. \cite{beigel2006multiparty, chattopadhyay2007languages}.
Define $Exactly_{n,k}(x_1,\ldots,x_k)=1$ if and only if 
$\sum_{i=1}^k x_i=n$, where $(x_1,\ldots,x_k)$ are non-negative integers. 
The reduction is simple, given an instance $(S_1,\ldots,S_k)$ 
to be computed, the players do the following:
\begin{enumerate}

\item The $k$-th player checks whether $S_1,\ldots,S_{k-1}$ are pairwise disjoint. If they are not
pairwise disjoint then the protocol ends with rejection. 

\item The first player checks whether $S_2,\ldots,S_{k-1}$ are each disjoint from $S_k$. 
If this is not the case then the protocol ends with rejection.

\item The second player checks whether $S_1 \cap S_k  = \emptyset$ and rejects if not.

\item The players use a protocol for $Exactly_{n,k}$ to determine whether $\sum_{i=1}^k |S_i| = n$.
The protocol accepts if and only if equality holds, and the sum is exactly $n$. 

\end{enumerate}

The first three steps of the above protocol require three bits of communication, and the last part
uses a protocol for $Exactly_{n,k}$. Chandra, Furst and Lipton \cite{CFL83} gave a surprising protocol for $Exactly_{n,k}$
with at most $O(\log \frac{kn \log kn}{r_k(kn)})$ bits of communication.
It was later observed by Beigel, Gasarch and Glenn $\cite{beigel2006multiparty}$ that when plugging in the bounds on $r_k(n)$ given by the construction
of Rankin \cite{rankin1965sets} one gets $O(\log \frac{kn \log kn}{r_k(kn)})  = O(\log n)^{1/\lceil \log_2 k \rceil}$. 
\end{proof}

\begin{proof}[of Theorem~\ref{th:2}]
As mentioned before $\log \chr(f) \le D(f)$ holds for every function f, 
combined with Lemma~\ref{lem:protocol_partnk} this gives
\[
\chr(Part_{n,k}) \le \exp\left( D(Part_{n,k}) \right)\le O\left( \frac{kn \log kn}{r_k(kn)} \right) = \exp\left( O(\log n)^{1/\lceil \log_2 k \rceil}\right).
\]
Since $\ind(f) \ge |f^{-1}(1)|/\chr(f)$ holds also for every $f$ and $|Part_{n,k}^{-1}(1)| = k^n$, we get
\[
\ind(Part_{n,k}) \ge \frac{k^n}{\chr(Part_{n,k}) } \ge \Omega\left( k^n \frac{r_k(kn)}{kn \log kn} \right)  = k^n \exp\left(-O(\log n)^{1/\lceil \log_2 k \rceil}\right).
\]
The lower bound on $c_{n,k}$ now follows from part 1 of Theorem~\ref{th:1}.
\end{proof}

%\begin{remark}

\section{Fujimura sets}
\label{sec:3}

The following definitions are from \cite{polymath2010moser}.
Let $\Delta_{n,k}$ denote the set of $k$-tuples $(a_1,\ldots,a_k) \in \mathbb{N}^k$ such that
$\sum_{i=1}^k a_i = n$. Define a {\em simplex} to be a set of $k$ points in $\Delta_{n,k}$ of the form
$(a_1+r,a_2,\ldots,a_k),(a_1,a_2+r,\ldots,a_k),\ldots,(a_1,a_2,\ldots,a_k+r)$ for some $0<r \le n$. 
Define a Fujimura set to be a subset $B \subset \Delta_{n,k}$ that contains no simplices. 

Theorem~\ref{th:2} actually proves a lower bound on the maximal size of a {\em Fujimura set} 
in $\Delta_{n,k}$, similarly to the proof in \cite{polymath2010moser}. In fact
\begin{itemize}

\item $\Delta_{n,k} = (Exactly_{n,k})^{-1}(1)$.

\item A simplex in $\Delta_{n,k}$ is equivalent to a star.

\item $\ind(Exactly_{n,k})$ is equal to the maximal size of a Fujimura set in $\Delta_{n,k}$,
which is denoted by $c^{\mu}_{n,k}$ in \cite{polymath2010moser}. 

\end{itemize}
 
The proof of Theorem~\ref{th:2} gives essentially a lower bound for $\ind(Exactly_{n,k})$ via
an efficient protocol for Exactly-$n$, and the lower bound for $c_{n,k}$ is implied from the fact that
$D(Part_{n,k}) \le D(Exactly_{n,k}) + 3$. It is an interesting question whether this bound is tight, or 
is it the case that $D(Part_{n,k})$ can be significantly smaller than $D(Exactly_{n,k})$.
This is equivalent to asking whether lower bounds on the density Hales-Jewett number via bounds
on the maximal size of a Fujimura set can be tight, or close to tight.
In this respect it is interesting to note the following characterization of the communication complexity
of $Exactly_{n,k}$ in the language of $Part_{n,k}$.

Let $m \geq n$ be natural numbers, define the function $Part_{m,k,n}: (2^{[m]})^k \to \{0,1\}$ as follows, $Part_{m,k,n}(S_1,\ldots,S_k) = 1$ if and only if
$(S_1,\ldots,S_k)$ are pairwise disjoint and $\left| S_1 \cup S_2 \cup \ldots \cup S_k \right| = n$. 
Clearly, $Part_{n,k} = Part_{n,k,n}$ and as we observe in the next theorem $D(Exactly_{n,k})$
is also equivalent to the complexity of some function in this family. 

We call a map $g: [n]^k \to (2^{[m]})^k$ {\em sum preserving} if the following two properties 
hold for every $(a_1,\ldots,a_k) \in ([n])^k$: (i) $|g(a_1,\ldots,a_k)_i|=a_i$,
(ii) $g(a_1,\ldots,a_k)$ are pairwise disjoint whenever $\sum_{i=1}^k a_i = n$.
\begin{theorem}
Let $n$, $k$ and $m$ be natural numbers, if there exists a sum preserving map $g: [n]^k \to (2^{[m]})^k$. Then
$$D(Exactly_{n,k}) \le D(Part_{m,k,n}) \le D(Exactly_{n,k}) +3.$$
\end{theorem}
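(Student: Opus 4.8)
The plan is to establish the two inequalities by NOF reductions that keep the coordinate-to-player assignment fixed, so that player $i$ of the target problem is simulated by player $i$ of the source problem; the right inequality reduces $Part_{m,k,n}$ to $Exactly_{n,k}$ and does not use $g$, while the left inequality reduces $Exactly_{n,k}$ to $Part_{m,k,n}$ through $g$. For the upper bound $D(Part_{m,k,n}) \le D(Exactly_{n,k}) + 3$ I would reuse the protocol of Lemma~\ref{lem:protocol_partnk}: on input $(S_1,\ldots,S_k)$ the players first spend three bits certifying pairwise disjointness --- player $k$ reports whether $S_1,\ldots,S_{k-1}$ are mutually disjoint, player $1$ reports whether each of $S_2,\ldots,S_{k-1}$ is disjoint from $S_k$, and player $2$ reports whether $S_1\cap S_k=\emptyset$ --- which together cover all $\binom{k}{2}$ pairs. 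If disjointness fails the players output $0$; otherwise $|S_1\cup\cdots\cup S_k|=\sum_{i=1}^k|S_i|$, so $Part_{m,k,n}(S_1,\ldots,S_k)=1$ exactly when $Exactly_{n,k}(|S_1|,\ldots,|S_k|)=1$, and since player $i$ sees every $S_j$ with $j\ne i$ it knows every $|S_j|$ with $j\ne i$, matching the forehead structure of $Exactly_{n,k}$. This costs $D(Exactly_{n,k})$ bits beyond the three, and in fact holds for every $m\ge n$.

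For the lower bound $D(Exactly_{n,k}) \le D(Part_{m,k,n})$ I would use $g$ to transport an instance. Given $(a_1,\ldots,a_k)$, put $S_i := g(a_1,\ldots,a_k)_i$, so that $|S_i|=a_i$ by property~(i). The reduction is correct because $Exactly_{n,k}(a_1,\ldots,a_k)=Part_{m,k,n}(S_1,\ldots,S_k)$: if $\sum_i a_i=n$ then property~(ii) makes the $S_i$ pairwise disjoint, whence $|\bigcup_i S_i|=\sum_i a_i=n$ and $Part_{m,k,n}=1$; conversely if $Part_{m,k,n}=1$ then the $S_i$ are disjoint and $\sum_i a_i=\sum_i|S_i|=|\bigcup_i S_i|=n$, so $Exactly_{n,k}=1$. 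The players then run an optimal $Part_{m,k,n}$ protocol on $(S_1,\ldots,S_k)$ and return its output, giving $D(Exactly_{n,k}) \le D(Part_{m,k,n})$.

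The step I expect to be the crux is checking that this last reduction respects the NOF constraints: for player $i$ to simulate player $i$ of the $Part$ protocol it must compute every $S_j$ with $j\ne i$ from the inputs it sees, namely $\{a_\ell\}_{\ell\ne i}$, with no access to $a_i$. This requires each coordinate $g(\,\cdot\,)_j$ to depend only on $a_j$ (coordinate-separability, $S_j=g_j(a_j)$); the sum-preserving maps relevant here have this form, and I would make it explicit before running the simulation, after which player $i$ recovers each $S_j$ ($j\ne i$) as $g_j(a_j)$ and the forehead constraints are satisfied. Combining the two reductions then yields the sandwich $D(Exactly_{n,k}) \le D(Part_{m,k,n}) \le D(Exactly_{n,k})+3$.
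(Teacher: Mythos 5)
Your proposal is correct and follows essentially the same route as the paper: the upper bound is the three-bit disjointness check followed by an $Exactly_{n,k}$ protocol on the cardinalities, and the lower bound transports an $Exactly_{n,k}$ instance through $g$ to an optimal $Part_{m,k,n}$ protocol. The one place you go beyond the paper is the ``crux'' you flag: the paper's definition of a sum preserving map does not actually require $g(a_1,\ldots,a_k)_j$ to depend only on $a_j$, yet without that coordinate-separability player $i$ cannot compute the sets $S_j$ ($j\ne i$) from its NOF view, and indeed a non-separable sum preserving map exists already for $m=n$ (partition $[n]$ into consecutive blocks of sizes $a_1,\ldots,a_k$ when the sum is $n$), which would trivialize the paper's own open question about separating $D(Part_{n,k})$ from $D(Exactly_{n,k})$. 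Making separability an explicit hypothesis, as you do, is the right repair and is satisfied by all the maps the paper actually uses.
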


\begin{proof}
Similarly to the case $m=n$, given an instance $(S_1,\ldots,S_k)$ to be computed, the protocol is:
\begin{enumerate}

\item The players check whether $S_1,\ldots,S_{k}$ are pairwise disjoint, using three bits of communication. 
If they are not pairwise disjoint then the protocol ends with a rejection. 

\item The players use a protocol for $Exactly_{n,k}$ to determine whether $\sum_{i=1}^k |S_i| = n$.
The protocol accepts if and only if equality holds, and the sum is equal to $n$. 

\end{enumerate}
It follows that $D(Part_{m,k,n}) \le D(Exactly_{n,k})+3$. 
Note that $Part_{m_1,k,n} \le Part_{m_2,k,n}$ whenever $m_1 \le m_2$. 
Therefore the minimal value of $D(Part_{m,k,n})$ is achieved when $m=n$, i.e.
for $Part_{n,k}$.  

On the other direction, to get a protocol for $Exactly_{n,k}$ the players decide before hand
on a sum preserving map $g: [n]^k \to (2^{[m]})^k$.
Then, given an instance $(a_1,\ldots,a_k)$ to $Exactly_{n,k}$ the players solve the instance $g(a_1,\ldots,a_k)$
using an optimal protocol for $Part_{m,k,n}$. Since $g$ is sum preserving this reduction always gives the correct answer.
We conclude that $D(Exactly_{n,k}) \le D(Part_{m,k,n})$.
\end{proof}

Therefore, the question of separating $D(Part_{n,k})$ from $D(Exactly_{n,k})$ is actually a question
of separating $D(Part_{m,k,n})$ for different values of $m$. Notice that for $m = kn$ 
there already exists a sum preserving map $g: [n]^k \to (2^{[m]})^k$, simply take
$g(a_1,\ldots,a_k) = (\{1,\ldots,a_1\}, \{n+1,\ldots,n+a_2\}, \ldots, \{(k-1)n+1,\ldots,(k-1)n+a_k\})$.

Thus the ranges of $m$ we are interested in are $m\le kn$.
This can be improved to $m \le \lceil \frac{kn}{2} \rceil$, by pairing adjacent entries and considering the map
$g(a_1,\ldots,a_k) = (\{1,\ldots,a_1\}, \{n-1,\ldots,n-a_2\}, \{n+1,\ldots,n+a_3\}, \{2n-1,\ldots,2n-a_4\}, \ldots)$.
The sets in this case might intersect, but if they do it implies that the sum $\sum a_i$ is greater than
$n$ and the value $0$ is correct. For $k=3$ this gives $m \le 2n$ and the question is to separate
$D(Part_{n,k,n})$ from $D(Part_{2n,k,n})$.

\section{Discussion and open problems}

The relation between the NOF model of communication complexity and Ramsey theory and related areas
of mathematics was evident already in the initial paper of Chandra, Furst and Lipton \cite{CFL83}. Since then, 
the breadth and profoundness of this relation is better understood, see e.g. \cite{pudlak2003application, beigel2006multiparty, chattopadhyay2007languages, ada2015nof, hdp17}. 
This note offers another strong bridge, showing that the Hales-Jewett theorem, a pillar of Ramsey theory, and related questions, are naturally formulated
in this model of communication complexity. 

We already know that the NOF model is rich enough to formulate many interesting questions in the theory of 
computer science, e.g. proving lower bounds on the size of $ACC^0$ circuits \cite{HG91}. 
The new relations that we find give yet another proof to the richness and significance of this model,
not only in computer science. For these relations to bear fruit though, it is not enough to describe the problems 
in communication complexity language, we need to also develop
the tools to handle them in this setting. The main purpose of this note was to further motivate this study.
Some interesting open questions in the context of the problems described here, are:

\begin{enumerate}

\item Find a protocol for $Part_{n,k}$ that does not rely on the construction
of Behrend \cite{behrend1946sets} and Rankin \cite{rankin1965sets}.

\item Find more efficient protocols for $Part_{n,k}$, and thus improve the lower bound on the density Hales-Jewett number.
For $k > 3$ we believe that the protocol described here is not optimal.

\item Prove a lower bound for the communication complexity of $Part_{n,k}$ using 
communication complexity tools, e.g. via a reduction.
Currently the tools of communication complexity do not seem to even give $D_k(Part_{n,k}) \to \infty$.    

\item Determine the relation between $D_k(Part_{n,k})$ and $D_k(Exactly_{n,k})$. From the point of view
of communication complexity it makes sense to believe that these two are closely related, since determining whether
pairwise disjoint sets $S_1,\ldots,S_k$ form a partition of $[n]$ essentially amounts to verifying that 
$|S_1|+|S_2|+\cdots+|S_k| = n$. It is therefore reasonable to make the following conjecture (see 
Section~\ref{sec:3} for further discussion):

\begin{conjecture}
$D_k(Part_{n,k}) = \Theta(D_k(Exactly_{n,k}))$.
\end{conjecture}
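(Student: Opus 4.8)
The plan is to prove the two inequalities by separate reductions, each of which must be checked to respect the NOF access pattern. Only the left-hand inequality $D(Exactly_{n,k}) \le D(Part_{m,k,n})$ uses the existence of the sum preserving map $g$; the right-hand inequality $D(Part_{m,k,n}) \le D(Exactly_{n,k}) + 3$ holds for every $m \ge n$ and is the exact analogue of the reduction already used in Lemma~\ref{lem:protocol_partnk} for the case $m = n$.

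For the upper bound, on input $(S_1,\ldots,S_k)$ I would first have the players certify that $S_1,\ldots,S_k$ are pairwise disjoint using three bits, exactly as in the protocol for $Part_{n,k}$: the $k$-th player checks all pairs inside $\{S_1,\ldots,S_{k-1}\}$, the first player checks each of $S_2,\ldots,S_{k-1}$ against $S_k$, and the second player checks $S_1$ against $S_k$, which together cover all $\binom{k}{2}$ pairs. If any check fails the protocol rejects; otherwise the players run an optimal protocol for $Exactly_{n,k}$ on the tuple $(|S_1|,\ldots,|S_k|)$. Correctness is immediate, since once the sets are known to be disjoint one has $|S_1 \cup \cdots \cup S_k| = \sum_i |S_i|$, so the union has size exactly $n$ iff $\sum_i |S_i| = n$. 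The one thing to verify is that this is a legal NOF simulation, and it is: player $i$ sees every $S_j$ with $j \ne i$, hence knows every $|S_j|$ with $j \ne i$ but not $|S_i|$, which is precisely the input pattern the $Exactly_{n,k}$ protocol expects. This gives $D(Part_{m,k,n}) \le D(Exactly_{n,k}) + 3$.

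For the lower bound I would fix a sum preserving map $g$ in advance and use it to reduce $Exactly_{n,k}$ to $Part_{m,k,n}$. Given $(a_1,\ldots,a_k)$, the players simulate an optimal protocol for $Part_{m,k,n}$ on $g(a_1,\ldots,a_k) = (T_1,\ldots,T_k)$. By property (i) we have $|T_i| = a_i$ for every $i$. If $\sum_i a_i = n$, then property (ii) makes the $T_i$ pairwise disjoint, so their union has size $\sum_i |T_i| = n$ and $Part_{m,k,n}(T) = 1$. If $\sum_i a_i \ne n$, then either the $T_i$ are not pairwise disjoint, or they are disjoint and their union has size $\sum_i a_i \ne n$; in both cases $Part_{m,k,n}(T) = 0$. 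Hence the simulation outputs $Exactly_{n,k}(a_1,\ldots,a_k)$ correctly, yielding $D(Exactly_{n,k}) \le D(Part_{m,k,n})$.

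The step I expect to be the real obstacle, and the reason the hypothesis is stated through a sum preserving map rather than an arbitrary encoding, is confirming that this last simulation is NOF-legal. To run the $Part_{m,k,n}$ protocol, player $i$ must reconstruct $T_j$ for every $j \ne i$ from the only data on the other foreheads, namely $\{a_\ell : \ell \ne i\}$. This is possible exactly when each output set $T_j = g(a_1,\ldots,a_k)_j$ depends only on the single coordinate $a_j$ --- that is, when $g$ acts coordinate-wise, $g(a_1,\ldots,a_k)_j = g_j(a_j)$ --- which is the form taken by both explicit maps exhibited after the theorem. Thus it is this coordinate-wise structure, over and above the cardinality condition (i) and the disjointness condition (ii), that ensures player $i$'s view determines every $T_j$ with $j \ne i$ while leaving $T_i$ hidden, and its verification is the crux of the argument.
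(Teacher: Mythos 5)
The statement you were asked about is a \emph{conjecture}: the paper does not prove it and explicitly lists it as an open problem. What you have written is, in substance, a proof of the paper's Section~\ref{sec:3} theorem, namely that $D(Exactly_{n,k}) \le D(Part_{m,k,n}) \le D(Exactly_{n,k})+3$ \emph{whenever} a sum preserving map $g:[n]^k \to (2^{[m]})^k$ exists; as a proof of that theorem it is correct and follows the paper's argument, and your observation that NOF-legality forces $g$ to act coordinate-wise (each $T_j$ must be computable from $\{a_\ell : \ell \ne i\}$ for every $i \ne j$, hence can depend only on $a_j$) is a genuine point that the paper's definition of ``sum preserving'' leaves implicit.

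But this does not prove the conjecture. The conjecture concerns $Part_{n,k}=Part_{n,k,n}$, i.e.\ the case $m=n$, and for that function your argument yields only the easy direction $D(Part_{n,k}) \le D(Exactly_{n,k})+3$, which is already Lemma~\ref{lem:protocol_partnk}. Your reduction in the other direction lands in $Part_{m,k,n}$ for an $m$ admitting a coordinate-wise sum preserving map, and no such map into $(2^{[n]})^k$ exists: for $m=n$, property (ii) forces $g_1(a_1),\ldots,g_k(a_k)$ to \emph{partition} $[n]$ for every tuple summing to $n$, and already for $k=3$ the tuples $(n-1,1,0)$ and $(n-1,0,1)$ force $g_2(1)=g_3(1)=[n]\setminus g_1(n-1)$ while $(n-2,1,1)$ forces $g_2(1)\cap g_3(1)=\emptyset$ --- a contradiction for every $n\ge 2$. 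The paper only exhibits sum preserving maps for $m\ge\lceil kn/2\rceil$, and whether $D(Exactly_{n,k})\le O(D(Part_{n,k}))$ --- equivalently, whether $D(Part_{n,k,n})$ can be separated from $D(Part_{m,k,n})$ for such $m$ --- is precisely the open content of the conjecture. Your proposal leaves that direction untouched.
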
 

If the above conjecture is true, it in particular gives a strong proof of the Hales-Jewett theorem
as well as deep insight into the relation between the Hales-Jewett theorem and multidimensional
Szemer\'{e}di theorems.

\end{enumerate}

%\end{remark}

%\section{Discussion}

%The relation between Ramsey theory and communication complexity in the NOF model is evident
%from the first paper defining this model \cite{CFL83}. It became even clearer with time that this relation
%is not just possible applications of Ramsey theory in communication complexity, but that in fact the NOF
%model is a part of Ramsey theory. The relation seems to be revolve around the communication complexity
%of high dimensional permutations \cite{hdp17} and their derivatives. For example, the following quantities
%all have an analogue in this framework:

%\begin{enumerate}

%\item The maximal size of a corner free set in the grid $[n] \times [n]$.

%\item The maximal number of colors needed to color the grid $[n]\times [n]$ so that no
%corner is monochromatic.

%\item  

%\end{enumerate}  

%%%%%%%%%%%%%%%%%%%%%%%%%%%%%%%%%%%%%%%%%%%%%%%%%%%%%%%%%%%%%%%%%%%%%%%%%%%%%%%%%%%%%%%%%%%%%%%%%%%%%%%%%%%%%%%%%
% Bibliography
%%%%%%%%%%%%%%%%%%%%%%%%%%%%%%%%%%%%%%%%%%%%%%%%%%%%%%%%%%%%%%%%%%%%%%%%%%%%%%%%%%%%%%%%%%%%%%%%%%%%%%%%%%%%%%%%%

%\bibliographystyle{alpha}
\bibliographystyle{plain}
\bibliography{../complexity}

%%%%%%%%%%%%%%%%%%%%%%%%%%%%%%%%%%%%%%%%%%%%%%%%%%%%%%%%%%%%%%%%%%%%%%%%%%%%%%%%%%%%%%%%%%%%%%%%%%%%%%%%%%%%%%%%%

\end{document}